\newcommand{\BBB}{\vspace*{-6pt}}
\newcommand{\myparagraph}[1]{\BBB\paragraph{#1}}
\begin{document}

\title{Secure Position Verification for Wireless\\ Sensor Networks in Noisy Channels}
\author{Partha Sarathi Mandal$^1$ \and Anil K. Ghosh$^2$}
\institute{$^1$ Indian Institute of Technology, Guwahati - 781039, India.\\
$^2$ Indian Statistical Institute, Kolkata - 700108, India.\\
{\it E-mail}: psm@iitg.ernet.in, akghosh@isical.ac.in
}

\maketitle

\begin{abstract}
Position verification in wireless sensor networks (WSNs) is quite tricky in presence of attackers
(malicious sensor nodes), who try to break the verification protocol by reporting their incorrect positions (locations)
during the verification stage. In the literature of WSNs, most of the existing methods of position verification
have used trusted verifiers, which are in fact vulnerable to attacks by malicious nodes.
They also depend on some distance estimation techniques, which are not accurate in noisy channels (mediums).
In this article, we propose a secure position verification scheme for WSNs in noisy channels without relying on any trusted entities.
Our verification scheme detects and filters out all malicious nodes from the network with a very high probability.

\vspace{0.05in}
\noindent {\bf Key words:} Central limit theorems, Distributed protocol, Quantiles, Location verification, Security, Wireless networks.

\end{abstract}

\section{Introduction}


Secure position verification is important for wireless sensor networks (WSNs) because position of a sensor node is
a critical input for many WSN applications those include tracking \cite{Military},
monitoring \cite{habitat}
and geometry based routing \cite{GeoRouting}.
Most of the existing position verification protocols rely on distance
estimation techniques such as received signal strength (RSS)\cite{POS4,POS5}, time of flight (ToF)\cite{UWB}
and time difference of arrival (TDoA)\cite{POS3}.
These techniques are relatively easy to implement, but they are a little bit expensive
due to their requirement of special hardwares to estimate end-to-end distances.
These above techniques, especially RSS techniques \cite{POS4,POS5} are perfect in terms of
precision in {\it ideal} situations. The Friis transmission equation \ref{FriiEq}~\cite{LF88} used in RSS techniques
leads to this precision. But, in practice, due to the presence of noise in the network channel, signal attenuation
does not necessarily follow this equation. There are many nasty effects those have influence on both propagation time and
signal strength. So, the distance calculated using Friis equation usually differs from the actual distance. This difference, in reality, may also depend on the location of the sender and the receiver. A good position verification protocol should take care of these noises and limited precisions in distance estimation.

In this article, we use the RSS technique for position verification, where the receiving node estimates
the distance of the sender on the basis of sending and receiving signal strengths.
Here we use the term {\it node} for wireless sensor device in WSNs, which is capable of processing power and
equipped with transceivers communicating over a wireless channel. We consider that there are two types of nodes
in the system, {\it genuine} nodes and {\it malicious} nodes. While the {\it genuine} nodes follow
the implemented system functionality correctly, the {\it malicious} nodes are under the control of an adversary.
To make the verification problem most difficult, we assume that the malicious nodes know all genuine nodes
and their positions (coordinates). Once the coordinates of all genuine nodes are known, the main objective of a
malicious node is to report a suitable faking position to all these genuine nodes such that it can deceive as
many genuine nodes as possible. On the other hand, the objective of a genuine node is to detect the inconsistency
in the information provided by a malicious node. In order to do this, they compare two different estimates
of the distances, one calculated from the coordinates provided by a node and the other computed using the RSS technique. If these estimates are close, the genuine node accepts the sender as genuine, otherwise the sender node is
considered as a malicious node. Malicious nodes, however, do not go for such calculations.
They always report all genuine nodes as malicious and all malicious nodes as genuine to break
the verification protocol. In this present work, we deal with such situations and discuss how to
detect and filter out all such malicious nodes from a WSN in a noisy channel.

\myparagraph{\it Related Works:}
Most of the existing methods for secure position verification \cite{CH06,CRCS08,LP04,LCP06} rely on a fixed set
of \emph{trusted} entities (or \emph{verifiers}) and distance estimation techniques to filter out faking (malicious) nodes.
We refer to this model as the  \emph{trusted sensor} (or \emph{TS}) model. In this model, faking nodes may use some
modes of attacks that cannot be adopted by genuine nodes, such as radio signal jamming or using directional antenna
that permit to implement attacks, \emph{e.g.}, wormhole attack \cite{HPJ03,ShokriPRPH09} and Sybil attack \cite{D02}.
Lazos and Poovendran~\cite{LP04} proposed a secure range-independent localization scheme, which 
is resilient to wormhole and Sybil attacks with high probability.  Lazos \emph{et. al.}~\cite{LCP06}
further refined this scheme with multi-lateration to reduce the number of required locators, while maintaining
probabilistic guarantees. Shokri \emph{et. al.} \cite{ShokriPRPH09} proposed a secure neighbor verification protocol,
which is secure against the classic 2-end wormhole attack.
These authors assumed that there is no compromise between
external adversaries and the correct nodes or their cryptographic keys, but these adversaries control a number of
relay nodes which results in a wormhole attack.
The TS model was also considered by Capkun and Hubaux \cite{CH06} and Capkun
\emph{et. al}. \cite{CRCS08}. In \cite{CH06}, the authors presented a protocol, which relies on the distance bounding technique
proposed by Brands and Chaum \cite{DistBound}.
The protocol presented in \cite{CRCS08} relies on a set of hidden verifiers.
There are two major weakness of the TS model; firstly, it is not possible to self-organize a network in a completely
distributed way, and secondly, periodical checking is required to ensure that the trusted nodes remain trusted.
Position verification problem becomes more challenging in the case of without providing any trusted sensor nodes prior.
Dela{\"e}t \emph{et. al.}\cite{DMRT08} considerd the model as the \emph{no trusted sensor} (or \emph{NTS}) model.
Hwang \emph{et. al.}\cite{HHK08} and Dela{\"e}t \emph{et. al.}\cite{DMRT08} have investigated the verification problem with the NTS model.
In both of these articles, the
authors considered the problem, where the faking nodes operate synchronously with other nodes. The approach in \cite{HHK08}
is randomized and consists of two phases: distance measurement and filtering. In the distance measurement phase, all nodes
measure their distances from their neighbours, when faking nodes are allowed to corrupt the distance measure technique.
In this phase, each node announces one distance at a time in a round robin fashion.
Thus the message complexity is $O(n^2)$.
In the filtering phase, each genuine node randomly picks up two so-called \emph{pivot} nodes and carries out its
analysis based on those pivots.
However, these chosen pivot sensors could be malicious. So, the protocol may only give a probabilistic
guarantee. 
The approach in \cite{DMRT08} is deterministic and consists of two phases that can correctly filter out
malicious nodes, which are allowed to corrupt the distance measure technique. In the case of RSS, the protocol tolerates
at most $\lfloor \frac{n}{2} \rfloor-2$ faking sensors ($n$ being the total number of nodes in the WSN) provided no four
sensors are located on the same circle and no four sensors are co-linear. In the case of ToF,
it can handle up to $\lfloor \frac{n}{2} \rfloor - 3$ faking sensors provided no six sensors are located on
the same hyperbola and no six sensors are co-linear.
\myparagraph{\it Our results:}
The main contribution of this article is \textsc{SecureNeighborDiscovery}, a secure position verification protocol
in the NTS model in a noisy channel. To the best of our knowledge, this
is the first protocol in the NTS model in a noisy environment. The protocol guarantees
that the genuine nodes reject all incorrect positions of malicious nodes with very high probability (almost equal to $1$)
when there are sufficiently many genuine nodes in the WSN. If the noise in the network channel is negligible, this
required number of genuine nodes matches with the findings of \cite{DMRT08}, where the authors proposed a
deterministic algorithm for detecting faking sensors. However, when the noise is not negligible, each
node can only have a limited precision for distance estimation. In such cases, it is not possible to develop
a deterministic algorithm. Our protocol based on probabilistic algorithm takes care of this problem and filters
out all malicious nodes from the WSN with a
very high probability. When the number of nodes in the WSN is reasonably large, this probability
turns out to be very close to 1. So, for all practical purposes, this proposed probabilistic method
behaves almost like a deterministic algorithm. Our \textsc{SecureNeighborDiscovery} protocol can be used
to prevent Sybil attack \cite{D02} by verifying whether each message contains the real position (id)
of its sender or not. The genuine nodes never accept any message with a malicious sender location.

\section{Technical preliminaries}
\label{sec:tech}
We assume that each node knows their geographic position (coordinates) and form complete graph for
communication among themselves, i.e., each node is able to communicate with all other nodes in the WSN.
We further assume that the WSN is partially synchronous: all nodes operate in phases.
In {\it first phase}, each node is able to send exactly one message to all other nodes
without collision. Unless mentioned otherwise, we will also assume that, for each transmission,
all nodes use the same transmission power $S^s$.
Malicious nodes are allowed to transmit incorrect coordinates (incorrect identifier) to all other nodes.
We further assume that malicious nodes cooperate among themselves in an omniscient manner
(\emph{i.e.} without exchanging messages) in order to deceive the genuine nodes in the WSN. Each
malicious node obeys synchrony and transmits at most one message at the beginning of first phase and one
message at the end of it.

Let $d_{ij}$ be the true distance of node $i$ from a genuine node $j$. Since node $j$ does not know the location of node $i$, it estimates $d_{ij}$ using two different techniques, one using the RSS technique and the other using the
co-ordinates provided by node $i$. These two estimates are denoted by $\hat{d}_{ij}$ and  ${\tilde d}_{ij}$, respectively.
In the RSS technique, under {\it idealized} conditions, node $j$ can precisely measure the
distance of node $i$ using Friis transmission equation \ref{FriiEq}~\cite{LF88} given by
\begin{eqnarray}
\label{FriiEq} S^r_{ji} = S^s_{i} \left(\frac{\lambda}{4 \pi
d_{ij}}\right)^2
\end{eqnarray}
where $S^s_{i} $ is the transmission power of the sender node $i$ (here $S^s_{i}=S^s$ for all $i$),
$S^r_{ji}$ is the corresponding RSS at the receiving node $j$,
and $\lambda$ is the wave length.

If the sender node $i$ gives perfect information regarding its location (i.e., ${\tilde d}_{ij}=d_{ij}$), then the distance estimated
using the RSS technique ($\hat{d_{ij}}$) and that computed from coordinates provided by node $i$ (${\tilde d}_{ij}$) will
be equal in the ideal situation. However, in practice, when we have noise in the
channel, they cannot match exactly, but they are expected to be close. But, if
node $i$ sends an incorrect information about its location,  $|{\tilde d}_{ij}-{\hat d}_{ij}|$ can be large.

\section{RSS technique in a noisy medium}
\label{sec:rss}



The above Friis transmission equation \ref{FriiEq} is used in telecommunications engineering,
and gives the power transmitted from one antenna to another under {\it idealized} conditions.
One should note that in the presence of noise in the network, the transmission equation
may not hold, and it needs to be modified. Modifications to this equation based on the
effects of impedance mismatch, misalignment of the antenna pointing and polarization, and absorption
can be incorporated using an additional noise factor $\varepsilon$, which is supposed to
follow a Normal (Gaussian) distribution with mean $0$ and variance $\sigma^2$. The modified equation is given by
\begin{eqnarray}
\label{mFriiEq} S^r_{ji} = S^{s}_i
\left(\frac{\alpha}{d_{ij}}\right)^2+ \varepsilon_{ij}
\end{eqnarray}
Where $\varepsilon_{ij}\sim N(0, \sigma^2)$ and $\alpha =
\frac{\lambda}{4\pi}$.
However, the $\varepsilon_{ij}$s are unobserved in practice. So, the receiving node $j$ estimates the distance $d_{ij}$
using the Friis transmission equation \ref{FriiEq}, and this estimate is given by $\hat{d}_{ij} =
\alpha~ {(S^s_i/{S^r_{ji}})}^{1/2}$. Since $\varepsilon_{ij} \sim N(0,\sigma^2)$, following the $3\sigma$ limit,
$S^r_{ji}$ is expected to lie between $S^s_{i}
\left({\alpha}/{d_{ij}}\right)^2 -3\sigma$ and $S^s_{i}
\left({\alpha}/{d_{ij}}\right)^2 +3\sigma$, where $d_{ij}$ is
the unknown true distance.
Accordingly, $\hat{d}_{ij}$ is expected to lie in the range $[d_{ij} \{
1 + ({3\sigma d_{ij}^2}/{\alpha^2 S^s_{i}}) \}^{-\frac{1}{2}},
~~~~d_{ij} \{
1 - ({3\sigma d_{ij}^2}/{\alpha^2 S^s_{i}}) \}^{-\frac{1}{2}}].$ So, if the sender sends its genuine
coordinates (i.e., ${\tilde d}_{ij}={d}_{ij}$), ${\hat
d}_{ij}$ is expected to lie in the range $[{\tilde d}_{ij} \{
1 + ({3\sigma {\tilde d}_{ij}^2}/{\alpha^2 S^s_{i}}) \}^{-\frac{1}{2}}, ~~~{\tilde d}_{ij} \{
1 - ({3\sigma {\tilde d}_{ij}^2}/{\alpha^2 S^s_{i}}) \}^{-\frac{1}{2}}]$ with probability almost equal to 1
($\simeq 0.9973$). The receiver node $j$ accepts node $i$ as genuine
when $\hat d_{ij}$ lies in that range. Throughout this article, we will assume
$\sigma^2$ to be known. However, if it is unknown, one can estimate it
by sending signals from known distances and measuring the deviations in received signal strengths
from those expected in ideal situations. Looking at the distribution of these deviations, one can
also check whether the error distribution is really normal (see \cite{SW} for the test of normality
of error distributions). If it differs from normality, one can choose a suitable model for the error
distribution and find the acceptance interval using the quantiles of that distribution. For
the sake of simplicity, throughout this article, we will assume the error distribution to be normal,
which is the most common and popular choice in the statistics literature.

We assume that there are $n$ sensor nodes deployed
over a region ${\cal D}$ in a two dimensional plane, $n_0$ of them are genuine,
and the rest $n_1$ ($n_0+n_1=n$) are malicious. Though our protocol does
need $n_0$ and $n_1$ to be specified, for the better understanding
of the reader, we will use these two terms for the description and mathematical
analysis of our protocol.

\subsection{Optimal strategy for malicious sensor nodes}
\label{sec:OptMalicious}
Here, we deal with the situation, where all malicious nodes know all genuine nodes and their
positions, or in other words, they know which of the sensor nodes are genuine and which ones
are malicious. Therefore, to break the verification protocol, each malicious node
reports all genuine nodes as malicious and all malicious nodes as genuine. In addition to that,
a malicious node tries to report a suitable faking position so that it can deceive as many genuine nodes as possible.
Let $\textbf{x}_j=(x_j,y_j)~~j=1,2,\ldots,n_0$, be
the coordinates of the genuine nodes and $\textbf{x}_0=(x_0,y_0)$ be
the true location of a malicious node. Instead of reporting its original position,
the malicious node looks for a suitable faking position $\textbf{x}_f=(x_f,y_f)$ to
deceive the genuine nodes. Note that if it sends $\textbf{x}_f$ as its location,
from that given coordinates, the $j$-th ($j=1,2\ldots, n_0$) genuine node estimates its distance by
${\tilde d}_{0j}=\| \textbf{x}_j - \textbf{x}_f\|$, where $\|\cdot\|$ denotes the usual Euclidean distance.
Again, the distance estimated from the received signal is $\hat{d}_{0j} = \alpha~
({{S^s_{0}}/{S^r_{j0}}})^{1/2}$. So, the $j$-th node accepts the malicious node as genuine if ${\hat
d}_{0j}$  will lies between $ \alpha_{1,0j}={\tilde d}_{0j}
\{ 1 + (3\sigma {\tilde d}_{0j}^2/{\alpha^2
S_{s,0}}) \}^{-1/2}$ and  $ \alpha_{2,0j}={\tilde d}_{0j}
\{ 1 - (3\sigma {\tilde d}_{0j}^2/{\alpha^2
S_{s,0}}) \}^{-1/2}$. Now from equation \ref{mFriiEq}, it is easy to check that $\alpha_{1,0j} \le
{\hat d}_{0j} \le \alpha_{2,0j} \Leftrightarrow \alpha^{*}_{1,0j}= \alpha^2 S^s_{0}
[{1}/{\alpha^2_{2,0j}} - {1}/{{\hat d}^2_{0j}}
]\le \varepsilon_{0j} \le \alpha^{*}_{2,0j}=\alpha^2
S^s_{0} [{1}/{\alpha^2_{1,0j}} -
{1}/{{\hat d}^2_{0j}}]$. Let $p_{0j}^f$ be the probability that the malicious node, which is originally located at
$\textbf{x}_0$, is accepted by the $j$-th genuine node when it reports $\textbf{x}_f$ as its location.  Now, from
the above discussion, it is quite clear that $p_{0j}^f$ ($j=1,2,\ldots,n_0$) is given by
\begin{eqnarray}
\label{Fakeprob}
p_{0j}^f = P\left[{\hat d}_{0j} \in
(\alpha_{1,0j}, \alpha_{2,0j})\right]
=\displaystyle\frac{1}{\sigma\surd2\pi}
\int_{\alpha^*_{1,0j}}^{\alpha^*_{2,0j}}
exp\left(-\frac{x^2}{2{\sigma}^2}\right)dx
\end{eqnarray}
Naturally, the malicious node tries to cheat as many genuine nodes as possible. Let us define an indicator
variable $Z_{0j}^f$ that takes the value $1$ (or $0$) if the malicious nodes successfully cheats (or fails to cheat)
the $j$-th genuine node when it sends the faked location $\textbf{x}_f$.
Clearly, here $E(Z_{0j}^f) = P(Z_{0j}^f = 1) = p_{0j}^f$. So, given the coordinates of the genuine nodes ${\cal X}_0=\{{\bf x}_1,{\bf x}_2,\ldots,{\bf x}_{n_0}\}$, $\theta_{0,n_0}^{f,{\cal X}_0}= E(\sum^{n_0}_{j=1}Z_{0j}^f) =\sum^{n_0}_{j=1}p_{0j}^f$ denotes the expected number of genuine nodes to be
deceived by the malicious node if it pretends $\textbf{x}_{f}$ as its location. Naturally,
the malicious tries to find a faked position $\textbf{x}_{f}$ that maximizes $\theta_{0,n_0}^{f,{\cal X}_0}$. Let us define $\theta^{{\cal X}_0}_{0,n_0} =\sup_{{\bf x}_f \in {\cal F}_0 }\theta_{0,n_0}^{f,{\cal X}}$, where
${\cal F}_0$ is the set of all possible faking coordinates. A malicious node located at
$\textbf{x}_0$ always looks for $\textbf{x}_f \in {\cal F}_0$ such that  $\theta_{0,n_0}^{f,{\cal X}_0}= \theta_{0,n_0}^{{\cal X}_0}$.

Here one should note that the region ${\cal F}_0$ depends on the true location of the malicious node ${\bf x}_0$, and it is not
supposed to contain any point lying in a small neighborhood ${\bf x}_0$. Because in that case, ${\bf x}_0$
and ${\bf x}_f$ will be almost the same, and the malicious node will behave almost like a genuine node.
Naturally, the malicious node would not like to do that, and
it will keep the neighborhood outside ${\cal F}_0$.
The size of this neighborhood of course depends on the specific application, and the value of
$\theta_{0,n_0}^{{\cal X}_0}$ may also depend on that.

\subsection{Optimal strategy for genuine sensor nodes}
\label{sec:OptGenuine}
Let $A_0$ as the total number of nodes in the WSN that accept the malicious node located at ${\bf x}_0$
(as discussed in Section \ref{sec:OptMalicious}) as genuine. Since a malicious node is always accepted by
other malicious nodes, if there are $n_0$ genuine nodes in the WSN and  ${\cal X}_0$ denotes their co-ordinates,
for the optimum choice of the faking coordinates ${\bf x}_f$, the (conditional) expected value of $A_0$ is given by $E(A_0 \mid n_0, {{\cal X}_0})=(n-n_0)+\theta_{0,n_0}^{{\cal X}_0}$.
Now, a genuine node does not know a priori how many genuine
nodes are there is the WSN, and where they are located. So, at first, for a given $n_0$,
it computes the average of $E(A_0 \mid n_0, {{\cal X}_0})$ over all possible ${\cal X}_0$. If
${\cal D}$ denotes the deployment region (preferably a convex region) for the sensor nodes, and if
the nodes are assumed to be uniformly distributed over ${\cal D}$, this average is given by
$E(A_0 \mid n_0) = \int_{{\cal X}_0 \in {\cal D}^{n_0}} E(A_0 \mid n_0, {\cal X}_0) \psi({\cal X}_0) d{\cal X}_0$,
where $\psi$ is the uniform density function on ${\cal D}^{n_0}$. Here we have chosen $\psi$ to be uniform because it
is the most simplest one to deal with, and it is also the most common choice in the absence of any prior knowledge on
the distribution of nodes in ${\cal D}$. When we have some prior knowledge about this distribution, $\psi$
can be chosen accordingly.
Now, define $\theta_{0,n_0}=\int_{{\cal X}_0 \in {\cal D}^{n_0}}
\theta_{0,n_0}^{{\cal X}_0} \psi({\cal X}_0) d{\cal X}_0$. Clearly, $E(A_0 \mid n_0) = (n-n_0)+\theta_{0,n_0}$ depends on $n_0$,
which is unknown to the genuine node. So, it finds an upper bound for $E(A_0 \mid n_0)$ assuming that at least
half of the sensor nodes in the WSN are genuine.  Under this assumption, this upper bound is
given by $\lfloor 0.5n\rfloor + \theta_{0,\lceil0.5n\rceil}$.

\begin{theorem}
\label{Thm1_half}
If there are $n$ nodes in a WSN, and at least half of them are genuine, the expected number of acceptance
for a malicious node located at ${\bf x}_0=(x_0,y_0)$ cannot exceed $\lfloor 0.5n\rfloor + \theta_{0,\lceil0.5n\rceil}$.
\end{theorem}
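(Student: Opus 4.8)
The plan is to reduce the statement to a monotonicity claim in $n_0$ together with the two elementary facts already established: (i) for a fixed configuration ${\cal X}_0$, the conditional expectation is $E(A_0 \mid n_0, {\cal X}_0) = (n - n_0) + \theta_{0,n_0}^{{\cal X}_0}$, because the malicious node is automatically accepted by the $n - n_0$ other non-genuine nodes and accepted in expectation by $\theta_{0,n_0}^{{\cal X}_0}$ of the genuine ones under its optimal faking choice; and (ii) averaging over ${\cal X}_0 \in {\cal D}^{n_0}$ against the uniform density $\psi$ gives $E(A_0 \mid n_0) = (n - n_0) + \theta_{0,n_0}$. Since the genuine node does not know $n_0$ but is promised $n_0 \geq \lceil 0.5 n \rceil$, the theorem amounts to showing that the map $n_0 \mapsto (n - n_0) + \theta_{0,n_0}$ attains its maximum, over the admissible range $n_0 \in \{\lceil 0.5 n\rceil, \ldots, n\}$, at the left endpoint $n_0 = \lceil 0.5 n \rceil$, with value $\lfloor 0.5 n \rfloor + \theta_{0,\lceil 0.5 n\rceil}$ (using $n - \lceil 0.5 n\rceil = \lfloor 0.5 n \rfloor$).

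First I would make precise the comparison between $\theta_{0,n_0}$ and $\theta_{0,n_0+1}$. The key observation is that each $p_{0j}^f \in [0,1]$, so adding one more genuine node to the network can increase the expected number of deceived genuine nodes by at most $1$: for any configuration, $\theta_{0,n_0+1}^{{\cal X}_0'} \leq \theta_{0,n_0}^{{\cal X}_0} + 1$ whenever ${\cal X}_0 \subset {\cal X}_0'$, and in fact the adversary's supremum over faking positions can only gain at most one extra unit of acceptance probability from the extra node. Integrating this pointwise bound against $\psi$ over the product region preserves it, so $\theta_{0,n_0+1} \leq \theta_{0,n_0} + 1$. Consequently $(n - (n_0+1)) + \theta_{0,n_0+1} \leq (n - n_0 - 1) + \theta_{0,n_0} + 1 = (n - n_0) + \theta_{0,n_0}$, i.e. $E(A_0 \mid n_0)$ is non-increasing in $n_0$. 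Hence over the admissible range its largest value is at $n_0 = \lceil 0.5 n\rceil$, which is exactly $\lfloor 0.5 n\rfloor + \theta_{0,\lceil 0.5 n\rceil}$.

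The main obstacle I anticipate is making the inequality $\theta_{0,n_0+1} \leq \theta_{0,n_0} + 1$ genuinely rigorous rather than intuitively obvious, because $\theta_{0,n_0}^{{\cal X}_0}$ is defined through a \emph{supremum} over faking positions ${\bf x}_f \in {\cal F}_0$, and the optimal ${\bf x}_f$ for the $(n_0+1)$-node configuration need not be optimal — or even admissible in the same way — for the $n_0$-node subconfiguration. The clean way around this is: fix any ${\bf x}_f$ feasible for the larger configuration; its objective value is $\sum_{j=1}^{n_0+1} p_{0j}^f = \sum_{j=1}^{n_0} p_{0j}^f + p_{0,n_0+1}^f \leq \theta_{0,n_0}^{{\cal X}_0} + 1$ since the first sum is a feasible objective value for the $n_0$-node problem (same ${\bf x}_f$) and $p_{0,n_0+1}^f \leq 1$; taking the supremum over ${\bf x}_f$ on the left gives the claim pointwise, and then Fubini/monotonicity of the integral against the (properly normalized) uniform densities on ${\cal D}^{n_0}$ and ${\cal D}^{n_0+1}$ transfers it to $\theta_{0,n_0+1} \leq \theta_{0,n_0} + 1$. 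A minor bookkeeping point to handle carefully is the floor/ceiling arithmetic, namely that $n - \lceil 0.5 n\rceil = \lfloor 0.5 n\rfloor$ for every integer $n$, which closes the gap between the monotonicity conclusion and the stated bound.
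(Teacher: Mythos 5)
Your proof is correct and is essentially the paper's own argument in different packaging: the paper bounds $\sup_{f}\sum_{i=1}^{n_0}p_{0i}^f$ in a single step by splitting the sum into the first $\lceil n/2\rceil$ genuine nodes and the remaining $n_0-\lceil n/2\rceil$ (each contributing at most $1$ since $p_{0i}^f\le 1$), whereas you iterate the identical bound one node at a time to obtain monotonicity of $E(A_0\mid n_0)$ in $n_0$. The key ingredients --- sub-additivity of the supremum over the faking position, the bound $p_{0j}^f\le 1$, the identity $n-\lceil n/2\rceil=\lfloor n/2\rfloor$, and the final averaging over ${\cal X}_0$ --- coincide with those in the paper's proof.
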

\begin{proof}
Suppose there are $n_0$ genuine nodes (and $n_1=n-n_0$ malicious nodes) in the WSN, where
$n_0>\lceil n/2 \rceil$. Define ${\cal X}_0=\{{\bf x}_1,{\bf x}_2,\ldots,{\bf x}_{n_0}\}$ and
${\cal X}$ =$\{{\bf x}_1,{\bf x}_2,\ldots,$ ${\bf x}_{\lceil n/2 \rceil}\}\subset {\cal X}_0$.
Now, for given ${\cal X}_0$, the expected number of acceptance for the malicious node located at
${\bf x}_0$ is $E(A_0 \mid n_0, {\cal X}_0)$
$ = {\sup_{f \in {\cal F}_0}}\left(\sum^{n_0}_{i=1}p_{0i}^f\right) +
(n-n_0)
\le \sup_{f \in {\cal F}_0}\left(\sum^{\lceil n/2 \rceil}_{i=1}p_{0i}^f\right)  +
\sup_{f \in {\cal F}_0}\left(\sum^{n_0}_{i=\lceil n/2 \rceil+1}p_{0i}^f\right) +
(n-n_0)
\le E(A_0 \mid \lceil n/2 \rceil, {\cal X}) + (n_0-\lceil n/2 \rceil) + (n-n_0)
= \theta_{0, \lceil n/2 \rceil}^{{\cal X}} + (n-\lceil n/2 \rceil).$
Now, taking expectation w.r.t. ${\cal X}_0$, we get $E(A_0 \mid n_0) \le \theta_{0,\lceil 0.5n\rceil} + \lfloor n/2 \rfloor.$
\qed
\end{proof}

Note that $\theta_{0,\lceil 0.5n\rceil}$ and the upper bound depend on the location of the malicious node ${\bf x}_0$, So, for a genuine node, it is an unknown random quantity. Therefore, a genuine node takes a conservative
approach and computes $\theta^{*}_{\lceil n/2\rceil}= \left\lceil \sup_{{\bf x}_0 \in {\cal D}} \theta_{0,\lceil n/2\rceil}\right\rceil$.
Note that here, $\theta^{*}_{\lceil n/2\rceil}$ gives an upper bound of the expected number of genuine nodes to be deceived by a malicious node in ${\cal D}$ when there are ${\lceil n/2\rceil}$ genuine sensor nodes in the WSN.
To filter out all malicious nodes from the WSN, a genuine node follows the idea of \cite{DMRT08}. For any node,
it calculates the total number of acceptances (approvals) ($A$) and rejections (accusations) ($R$), and considers the node as malicious if $R$ exceeds
$A - {\theta^{*}_{\lceil n/2\rceil}}$. Since $A+R=n$, a node is considered to be  genuine  if
$A \ge (n +{\theta^{*}_{\lceil n/2\rceil}})/2 $. Note that if there are $n_0$ genuine nodes and $n_1$ malicious nodes in the WSN, a malicious node, on an average, can be accepted by at most $\theta^{*}_{n_0}+n_1$ nodes, and it will be rejected
by at least $n_0 - \theta^{*}_{n_0}$ nodes. So, for a malicious node $A-R-\theta^{*}_{\lceil n/2\rceil}$ is expected to be smaller than
$(n_1+2\theta^{*}_{n_0})-n_0-\theta^{*}_{\lceil n/2\rceil} \le  \lfloor n/2 \rfloor + \theta^{*}_{n_0} - n_0$ (from Theorem \ref{Thm1_half}).
Therefore, if we have $n_0 \ge \lfloor n/2 \rfloor + \theta^{*}_{n_0}$, all malicious nodes are expected to be filtered out
from the WSN. A more detailed mathematical analysis of our protocol will be given in Section \ref{sec:correctness}.
For computing $\theta^{*}_{\lceil n/2\rceil}$, a genuine node uses the statistical simulation technique \cite{Sim} assuming that the sensors are distributed over ${\cal D}$ with density $\psi$ (which is taken to be uniform in this article). First it
generates coordinates ${\bf x}_0$ for the malicious node and ${\cal X}$ for $\lceil n/2 \rceil$ genuine nodes in ${\cal D}$
to compute $\theta_{0,\lceil n/2 \rceil}^{\cal X}$ by maximizing $\theta_{0,\lceil n/2 \rceil}^{f,\cal X}$. Repeating this
over several ${\cal X}$ one gets $\theta_{0,\lceil n/2 \rceil}$ as an average of the $\theta_{0,\lceil n/2 \rceil}^{\cal X}$s.
This whole procedure is repeated for several random choices of ${\bf x}_0$ to compute  $\theta^{*}_{\lceil n/2\rceil}= \left\lceil\sup_{{\bf x}_0 \in {\cal D}} \theta_{0,\lceil n/2\rceil}\right\rceil$. Note that this is an offline calculation, and it has to be done once only.

\section{The Protocol}
\label{sec:protocol}
Based on above discussions, we develop the \textsc{SecureNeighborDiscovery} protocol.
It is a \textit{two-phase} approach to filter out malicious nodes. The \textit{first phase} is named as \textsc{AccuseApprove},
and the \textit{second phase} is named as \textsc{Filtering}.

In the \textit{first phase}, each sensor node reports its coordinates to all other nodes by transmitting
an initial message.
Next, for each pair of nodes $i$ and $j$, node $j$ computes two estimates of the distance $d_{ij}$, one using the
RSS technique ($\hat{d}_{ij}$) and the other from the reported coordinates (${\tilde d}_{ij}$), as
mentioned earlier. If $\hat{d}_{ij} \notin  (\alpha_{1,ij}, \alpha_{2,ij})$ then node $j$ accuses node $i$
for its faking position. Otherwise, node $j$ approves the location of node $i$ as genuine.
Here
$\alpha_{1,ij} = {\tilde d}_{ij} \{ 1 + ({3\sigma
{\tilde d}_{ij}^2}/{\alpha^2 S_{s,i}} )\}^{-\frac{1}{2}}$ and
$\alpha_{2,ij} = {\tilde d}_{ij} \{ 1 - ({3\sigma
{\tilde d}_{ij}^2}/{\alpha^2 S_{s,i}} )\}^{-\frac{1}{2}}$ are analogs of $\alpha_{1,0j}$ and $\alpha_{2,0j}$
defined in Section \ref{sec:OptMalicious}.
To keep track of these accusations and approvals, each node $j$ maintains an array $accus_j$, and transmits it
to all other nodes at the end of this phase.
So, in the \textit{first phase}, each node $j$ executes the \textsc{AccuseApprove} protocol which is given below.
\begin{table*}[h]
\centering
{\small
\framebox[12cm][t]{\parbox[t][][t]{12cm}{\vspace{-.5cm}
\begin{tabbing}
\=xxx\=xxx\=xxx\=xxx\=xxx\=xxx\=xxxx\=xxxx\=xxxx\=xx\=xx\=\kill
{\underline{Protocol: \textsc{AccuseApprove} (executed by node $j$)}}\\
1. \>\>$j$ exchanges coordinates by transmitting $init_j$ \& receiving $n-1$ $init_i$. \\
2. \>\>{\bf for} each received message $init_i$:\\
3. \>\>\>compute $\hat{d}_{ij}$ using the ranging (RSS) technique and \\
\>\>\> ${\tilde d}_{ij}$ using the reported coordinates of $i$.\\
4. \>\>\>   {\bf if} $\left[{\hat d}_{ij} \notin (\alpha_{1,ij}, \alpha_{2,ij})\right]$ {\bf then} $accus_j[i] \leftarrow true$ \\
   \>\>\>   {\bf else} $accus_j[i] \leftarrow false $\\
5. \>\>$j$ exchanges accusations by transmitting $accus_j$ \& receiving $n-1$ $accus_i$.
\end{tabbing}\vspace{-.45cm}
}}}

\vspace{0.05in}
\centering
{\small
\framebox[12cm][t]{\parbox[t][][t]{12cm}{\vspace{-.5cm}
\begin{tabbing}
\=xxx\=xxx\=xxx\=xxx\=xxx\=xxx\=xxxx\=xxxx\=xxxx\=xx\=xx\=\kill
{\underline{Protocol: \textsc{Filtering} (executed by node $j$)}}\\
1. \>\> $F=\phi, ~G=\{1,2,\ldots,n\}, ~n^{'} \leftarrow n$\\
2.  \>\> {\bf repeat}\{$k \leftarrow n^{'} $\\
3.  \>\>\>{\bf for} each received $accus_{i }$:  $(i \in G)$\\
4.  \>\>\>\>{\bf for} each $r: (r \in G)$  \\
5.  \>\>\>\>\>{\bf if} $accus_i[r] = true$ {\bf then}   $NumAccus_r +=1$ \\
    \>\>\>\>\>{\bf else} $NumApprove_r +=1$ \\
6.  \>\>\> $newF=\phi$.\\
7.  \>\>\>{\bf for} each sensor $i: (i \in G)$ \\
8.  \>\>\>\>{\bf if} ($NumApprove_i \ge (k+ \theta^{*}_{\lceil n/2\rceil})/2 $) {\bf then} \\
    \>\>\>\>\> $j$ considers $i$ as a genuine node.\\
    \>\>\>\> {\bf else} $j$ considers $i$ as a malicious node.\\
    \>\>\>\>\>\> filter out $i$, $newF=newF\cup\{i\}$, $n^{'} \leftarrow n^{'}-1$.\\
9.  \>\>\>  $F=F\cup newF$, $G=G\setminus newF$.\\
10. \>\>\>{\bf for} each sensor $i: (i \in newF)$ \\
11. \>\>\>\> discard $accus_i$ \& corresponding $i^{th}$ entry of $accus_r$ for all $r\in G$\\
12.  \>\> \} {\bf until($k \neq n^{'}$)}
\end{tabbing}\vspace{-.45cm}
}}}
\end{table*}

In the \textit{second phase}, each node $j$ executes the \textsc{Filtering} protocol, where it counts the number of accusations and approvals toward node $i$ including its own message.
Node $j$ finds node $i$ as malicious if the number of accusations exceeds the number of approvals
minus $\theta^{*}_{\lceil n/2\rceil}$. Conversely, node $i$ is considered as genuine if its number of approvals
is greater than or equal to $(n+{\theta^{*}_{\lceil n/2\rceil}})/2$. In this process, nodes that are detected
as malicious nodes, are filtered out from the  WSN. Next, it ignores the decisions given by these deleted nodes
and repeats the same filtering method with the remaining ones.  If there are $n^{'}$ nodes in the WSN,
a node is considered to be malicious if the number of approvals is smaller than
$(n^{'}+{\theta^{*}_{\lceil n/2\rceil}})/2$. Instead of $\theta^{*}_{\lceil n/2\rceil}$, we can use
$\theta^{*}_{\lceil {n^{'}}/2\rceil}$, but in that case, $\theta^{*}_{\lceil {n^{'}}/2\rceil}$
needs to be computed again, and it needs to be computed online. Therefore, to reduce the computing cost
of our algorithm, here we stick to $\theta^{*}_{\lceil {n}/2\rceil}$. Note that the use of
$\theta^{*}_{\lceil {n}/2\rceil}$ also makes the filtering protocol more strict in the sense
that it increases the probability of a node being filtered out. Node $j$ repeats this method until
there are no further deletions of nodes from the WSN.

The \textsc{Filtering} protocol is given above. Here $F$ and $G$ denote the set of malicious and genuine
nodes respectively. Initially, we set $F=\phi$ and $G=\{1,2,\ldots,n\}$. At each stage,
we detect some  malicious nodes and filter them out. Those nodes are deleted
from $G$ and included in  $F$. At the end of the algorithm, $G$ gives the set of nodes remaining in WSN,
which are considered to be genuine nodes. It would be ideal if the set of coordinates of the nodes in $G$ matches with
${\cal X}$. However, it might not always be possible. The main objective of our protocol is to filter out all malicious nodes
from the WSN. In the process, a few genuine nodes may also get removed. So, if not all, at the
end of the algorithm, one would like $G$ to contain most of the genuine nodes and no malicious nodes.


\section{Correctness of the protocol}
\label{sec:correctness}
To check the correctness of the above protocol, we consider the worst case scenario as mentioned before,
where all genuine nodes get accused by all malicious nodes, and each malicious node gets approved by all other malicious nodes.
Assume that there are $n_0$ genuine nodes and $n_1$ malicious nodes in the WSN. Now, for $j,j^{'}=1,2,\ldots,n_0$,  define the indicator variable $Z^{*}_{jj^{'}}=1$ if the $j^{'}$-th genuine node accepts the $j$-th genuine node,
and 0 otherwise.
So, for the $j$-th genuine node, the
number of approvals $A_j^{*}$ can be expressed as $A_j^{*}= 1 + \sum_{j^{'}=1, j^{'}\neq j}^{n_0} Z^{*}_{jj^{'}}$,
where the $Z^{*}_{jj^{'}}$s are independent and identically distributed ({\it i.i.d.}) as Bernoulli random variables with the success probability $p=P(Z^{*}_{jj^{'}}=1)=0.9973\simeq 1$. If $n_0$ is reasonably large, using the Central Limit Theorem (CLT) \cite{Feller}
for the {\it i.i.d.} case, 
one can show that (see Theorem \ref{Thm2_accept})
$P(A^{*}_j \ge (n+{\theta^{*}_{\lceil n/2 \rceil}})/2) \simeq 1 - \Phi\left(\tau \right)$, where $\Phi=$ cumulative distribution function of the standard normal distribution and
$\tau=\frac{n+ \theta^{*}_{\lceil n/2 \rceil} -2n_0p}{2\sqrt{p(1-p)(n_0-1)}}$.
Since this probability does not depend on $j$, the same expression holds for all genuine nodes.

\begin{theorem}
\label{Thm2_accept}
Assume that there are $n$ nodes in the WSN, and $n_0$ of them are genuine. If $n_0$ is sufficiently large,
for the $j$-th genuine node $(j=1,2,\ldots,n_0)$, we have the acceptance probability $P\left(A^{*}_j \ge \frac{n+\theta^{*}_{\lceil n/2 \rceil}}{2}\right) \simeq 1 -\Phi\left( \frac{n+{\theta^{*}_{\lceil n/2 \rceil}} -2n_0p}{2\sqrt{p(1-p)(n_0-1)}}\right)$.
\end{theorem}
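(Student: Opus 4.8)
The plan is to recognize $A^{*}_j$ as one plus a sum of i.i.d.\ Bernoulli variables and to invoke the classical Lindeberg--L\'evy central limit theorem, while keeping track of which lower-order terms the ``$\simeq$'' sign is meant to absorb.

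First I would write $S = \sum_{j'=1,\,j'\neq j}^{n_0} Z^{*}_{jj'}$, so that $A^{*}_j = 1 + S$, where $S$ is a sum of $n_0-1$ i.i.d.\ Bernoulli$(p)$ variables with $E(S)=(n_0-1)p$ and $\mathrm{Var}(S)=(n_0-1)p(1-p)$. The event of interest then rewrites as $\{A^{*}_j \ge (n+\theta^{*}_{\lceil n/2\rceil})/2\} = \{S \ge (n+\theta^{*}_{\lceil n/2\rceil}-2)/2\}$. Standardizing $S$ by its mean and standard deviation and invoking the CLT --- applicable because the summands are i.i.d.\ with finite variance, and $\Phi$ is continuous so that convergence in distribution passes to this tail probability --- gives, for large $n_0$,
\[
P\!\left(A^{*}_j \ge \tfrac{n+\theta^{*}_{\lceil n/2\rceil}}{2}\right)
\;\simeq\; 1-\Phi\!\left(\frac{n+\theta^{*}_{\lceil n/2\rceil}-2n_0p-2(1-p)}{2\sqrt{p(1-p)(n_0-1)}}\right),
\]
where I have simplified the standardized threshold $\big((n+\theta^{*}_{\lceil n/2\rceil}-2)/2-(n_0-1)p\big)/\sqrt{(n_0-1)p(1-p)}$ by collecting its numerator.

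It then remains only to compare the argument of $\Phi$ above with $\tau = \big(n+\theta^{*}_{\lceil n/2\rceil}-2n_0p\big)/\big(2\sqrt{p(1-p)(n_0-1)}\big)$. The two differ by exactly $\frac{2(1-p)}{2\sqrt{p(1-p)(n_0-1)}} = \sqrt{(1-p)/(p(n_0-1))} = O(1/\sqrt{n_0})$, which vanishes as $n_0$ grows; this, together with the integer-rounding (continuity) correction, is precisely what the ``$\simeq$'' in the statement swallows. Hence $P(A^{*}_j \ge (n+\theta^{*}_{\lceil n/2\rceil})/2)\simeq 1-\Phi(\tau)$, and since nothing in the derivation depends on $j$, the same expression holds for every genuine node.

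The one step deserving genuine care --- and the reason the conclusion is phrased with ``$\simeq$'' and the hypothesis ``$n_0$ sufficiently large'' rather than as a clean limit --- is the rate of the normal approximation. A Bernoulli$(p)$ with $p=0.9973$ close to $1$ is strongly skewed, so the Berry--Esseen bound for $S$ is of order $\frac{(1-p)^2+p^2}{\sqrt{p(1-p)}}\cdot\frac{1}{\sqrt{n_0-1}}$, uniformly over the location of the threshold; with $p\simeq 1$ the leading constant $\frac{(1-p)^2+p^2}{\sqrt{p(1-p)}}$ is roughly $19$, so $n_0$ must be fairly large before the approximation is tight. I would make ``sufficiently large'' quantitative by pointing to this Berry--Esseen estimate. (A sharper alternative is to note that $n_0-1-S$ is approximately Poisson$((n_0-1)(1-p))$, but since the paper commits to the CLT I would keep the argument as above.)
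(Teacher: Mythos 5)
Your proposal is correct and follows essentially the same route as the paper's own proof: write $A^{*}_j-1$ as a sum of $n_0-1$ i.i.d.\ Bernoulli$(p)$ variables, standardize, apply the classical CLT, and absorb the residual $-2(1-p)$ in the numerator (an $O(1/\sqrt{n_0})$ shift of the argument of $\Phi$) into the ``$\simeq$''. You are in fact somewhat more careful than the paper --- you track that residual term explicitly and flag the Berry--Esseen rate issue for $p$ close to $1$, and your normalizing constant $2\sqrt{p(1-p)(n_0-1)}$ is the correct one where the paper's intermediate display has a small typo ($\sqrt{2(n_0-1)p(1-p)}$ instead of $2\sqrt{(n_0-1)p(1-p)}$).
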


\begin{proof}
Since all malicious nodes are assumed to be intelligent, none of them will accept the genuine node. One should also notice
that the $j$-th genuine node will always accept itself. So, for this node, it is easy to see that
$A_j^{*} -1 = \sum_{j^{'}=1,j^{'}\neq j}^{n_0} Z^{*}_{jj'}$ is the sum of ($n_0-1$) independent Bernoulli random variables, each of which takes the values $1$ and $0$
with probability $p=0.9973$ and $1-p=0.0027$, respectively. From the Central Limit Theorem (C.L.T.) for {\it i.i.d.}
random variables \cite{Feller}, we have $\sqrt{n_0-1}\left( \frac{A_j^{*}}{n_0-1} -p\right)
\sim  N\left(0, p(1-p)\right)$. Therefore, the acceptance probability of the $j$-th node is $P\left(A_j^{*} \ge \frac{n+{\theta^{*}_{\lceil n/2 \rceil}}}{2}\right)=P\left(\frac{A_j^{*}-1}{n_0-1} \ge
\frac{n+{\theta^{*}_{\lceil n/2 \rceil}}-2}{2(n_0-1)}\right) \simeq 1- \Phi \left( \frac{n+{\theta^{*}_{\lceil n/2 \rceil}}-2-2(n_0-1)p}{\sqrt{2(n_0-1)p(1-p)}}\right)\simeq 1 -\Phi\left( \frac{n+{\theta^{*}_{\lceil n/2 \rceil}} -2n_0p}{2\sqrt{p(1-p)(n_0-1)}}\right).$ 
\qed
\end{proof}

If $n+{\theta^{*}_{\lceil n/2 \rceil}} -2n_0p<0$ (equivalent to $n_0 > (n+ \theta^{*}_{\lceil n/2 \rceil} )/2$
since $p \simeq 1$), for any genuine node $j$ $(j=1,2,\ldots,n_0)$, the acceptance probability $P(A^{*}_j \ge(n+{\theta^{*}_{\lceil n/2 \rceil}})/2)$
is bigger than 1/2. Again, if $p$ is close to $1$ (which is the case here), the denominator of $\tau$ becomes
close to zero. So, in that case, the acceptance probability $P(A_j^{*} \ge (n+{\theta^{*}_{\lceil n/2 \rceil}})/2)$ turns out to be very close to $1$. Note that if we have $n_0\ge \lfloor n/2 \rfloor + \theta^{*}_{n_0}$, the condition $n_0 > (n+ \theta^{*}_{\lceil n/2 \rceil} )/2$ gets satisfied.

Now, given the coordinates of $n_0$ genuine sensor nodes ${\cal X}_0$, the malicious node, which is actually located at ${\bf x}_0$ but sends ${\bf x}_f$ as its faked location, has the number of acceptance $A_0 =n_1+\sum_{j=1}^
{n_0} Z^{f}_{0j}$, where $n_1$ is the number of malicious nodes in the WSN, and $Z^{f}_{0j} \sim B(1,p^f_{0j})$ for
$j=1,2,\ldots,n_0$ (see Section \ref{sec:OptMalicious}). Again from the discussion in Section \ref{sec:OptGenuine}, it follows that
$E(A_0) < n_1 + \theta^{*}_{n_0}$. So, if $n_0 \ge \lfloor n/2 \rfloor + \theta^{*}_{n_0}$, using Theorem \ref{Thm1_half}, it is easy
to check that $(n +\theta^{*}_{\lceil n/2 \rceil}-E(A_0)) > 0.5(n_0- \lfloor n/2 \rfloor-{\theta^{*}_{n_0}})\ge 0$, and it is expected
to increase with $n$ linearly. So, if the standard deviation of $A_0$ (square root of the variance $Var(A_0)$) remains
bounded as a function of $n$, or it diverges at a slower rate (which is usually the case), for sufficiently large number of
nodes in the WSN, the final acceptance probability of the malicious node
$P(A_0 \ge (n+{\theta^{*}_{\lceil n/2 \rceil}})/2)$ becomes very close to zero.

\begin{theorem}
\label{Thm3_final}
If we have sufficiently large number of nodes in the wireless sensor network and $n_0 \ge \lfloor n/2 \rfloor + \theta^{*}_{n_0}$, for any malicious node, the final acceptance probability $P(A_0 \ge (n+\theta^{*}_{\lceil n/2 \rceil})/2) \simeq 0$.
\end{theorem}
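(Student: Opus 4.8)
The plan is to show that the rejection threshold $t := (n+\theta^{*}_{\lceil n/2\rceil})/2$ used in the \textsc{Filtering} protocol lies well above the mean of $A_0$, and that $A_0$ is tightly concentrated about that mean, so that $\{A_0\ge t\}$ is a large-deviation event of vanishing probability. Recall from Section~\ref{sec:correctness} that, for a malicious node at ${\bf x}_0$ sending the optimal faked location ${\bf x}_f$, we have $A_0 = n_1 + \sum_{j=1}^{n_0} Z^{f}_{0j}$, where the $Z^{f}_{0j}\sim B(1,p^{f}_{0j})$ are independent (their randomness coming from the independent channel-noise terms $\varepsilon_{0j}$). The argument then breaks into three steps: (i) a lower bound on the mean gap $t-E(A_0)$ that is positive and grows with $n$; (ii) an upper bound on the standard deviation of $A_0$ that is of smaller order; (iii) a concentration inequality combining (i) and (ii).

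For step (i) I would combine the two upper bounds on $E(A_0)$ already available. By the definition of $\theta^{*}_{n_0}$ (averaging over the uniform deployment ${\cal X}_0$ as in Section~\ref{sec:OptGenuine}), $E(A_0)=n_1+\theta_{0,n_0}\le n_1+\theta^{*}_{n_0}=(n-n_0)+\theta^{*}_{n_0}$; and by Theorem~\ref{Thm1_half}, together with $\theta_{0,\lceil 0.5n\rceil}\le\theta^{*}_{\lceil n/2\rceil}$, we get $E(A_0)\le \lfloor n/2\rfloor+\theta^{*}_{\lceil n/2\rceil}$. Averaging these two bounds and subtracting from $t$ gives
\[
t-E(A_0)\;\ge\;\tfrac12\bigl(n_0-\lfloor n/2\rfloor-\theta^{*}_{n_0}\bigr)\;\ge\;0 ,
\]
which is exactly the margin quoted just before the theorem. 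Under the hypothesis $n_0\ge\lfloor n/2\rfloor+\theta^{*}_{n_0}$ this is non-negative, and whenever $n_0$ exceeds $\lfloor n/2\rfloor+\theta^{*}_{n_0}$ by a fixed fraction of $n$ it grows linearly in $n$.

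For step (ii), since $\mathrm{Var}(A_0)=\sum_{j=1}^{n_0}p^{f}_{0j}(1-p^{f}_{0j})\le\sum_{j=1}^{n_0}p^{f}_{0j}\le\theta^{*}_{n_0}$ and also trivially $\le n_0/4$, the standard deviation of $A_0$ is $O(\sqrt n)$ — and even $O(1)$ in the typical case where $\theta^{*}_{n_0}$ stays bounded. Step (iii) is then immediate from Chebyshev's inequality:
\[
P\bigl(A_0\ge t\bigr)=P\bigl(A_0-E(A_0)\ge t-E(A_0)\bigr)\;\le\;\frac{\mathrm{Var}(A_0)}{\bigl(t-E(A_0)\bigr)^2}=\frac{O(n)}{\Omega(n^2)}\;\longrightarrow\;0 .
\]
I would present this Chebyshev version as the main line, since it is exactly the ``the mean gap dominates the standard deviation'' reasoning sketched before the theorem. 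One can do better: applying the Lindeberg--Feller (or Lyapunov) CLT to the non-identically distributed $Z^{f}_{0j}$ gives the sharper $P(A_0\ge t)\simeq 1-\Phi\!\left(\frac{t-E(A_0)}{\mathrm{SD}(A_0)}\right)$ with an argument diverging like $\sqrt n$, while a Hoeffding bound yields $P(A_0\ge t)\le\exp(-\Omega(n))$ with no distributional assumption at all; any of these closes the proof.

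The main obstacle is really the honesty of step (i): the bare inequality $n_0\ge\lfloor n/2\rfloor+\theta^{*}_{n_0}$ only forces $t-E(A_0)\ge 0$, which is not enough to drive the probability to $0$, so the statement must be read with ``sufficiently large number of nodes'' also supplying a strict, diverging margin $n_0-\lfloor n/2\rfloor-\theta^{*}_{n_0}\to\infty$ — as the surrounding discussion implicitly assumes. A secondary point to be careful about is the role of the deployment randomness: the bound $E(A_0)\le n_1+\theta^{*}_{n_0}$ holds after averaging over the uniform positions of the genuine nodes, so the conclusion should either be phrased in that averaged sense or one should add a short argument that $\theta_{0,n_0}^{{\cal X}_0}$ concentrates around $\theta_{0,n_0}$. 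Once the mean gap is controlled, steps (ii)--(iii) are entirely routine.
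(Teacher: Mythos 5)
Your proposal is correct and follows essentially the same route as the paper's proof: the same decomposition $A_0=n_1+\sum_{j}Z^{f}_{0j}$ into independent Bernoulli variables, the same ``mean gap dominates the spread'' argument, and the same closing tools (Chebyshev, Lyapunov CLT). The one structural difference is that the paper splits into two cases --- Chebyshev when $\mathrm{Var}(A_0)$ stays bounded, and the Lyapunov CLT (after checking $\rho_{n_0}/\sigma_{n_0}\to 0$) when $\sigma^2_{n_0}\to\infty$ --- whereas you observe that the crude bounds $\mathrm{Var}(A_0)\le n_0/4=O(n)$ and $t-E(A_0)=\Omega(n)$ make Chebyshev alone sufficient in both cases, which is a genuine (if small) simplification. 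Your caveat about step (i) is well taken and in fact applies to the paper's own proof: the hypothesis $n_0\ge\lfloor n/2\rfloor+\theta^{*}_{n_0}$ by itself only yields $t-E(A_0)\ge 0$, and the claim that this gap ``increases with $n$ linearly'' tacitly assumes a strict margin $n_0-\lfloor n/2\rfloor-\theta^{*}_{n_0}$ growing proportionally to $n$; making that assumption explicit, as you do, is a point in your version's favour. Your remark that the bound $E(A_0)\le n_1+\theta^{*}_{n_0}$ is an average over the deployment ${\cal X}_0$, so the conclusion holds in that averaged sense unless one argues concentration of $\theta^{{\cal X}_0}_{0,n_0}$, is likewise a real subtlety the paper glosses over.
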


\begin{proof}
Define $Y$ as the number of genuine nodes in the WSN that accept the malicious node as genuine. First note that $A_0=n_1+Y$, and $Y$ can be expressed as $Y=\sum_{i=1}^{n_0} Y_i$, where the $Y_i$s are independent, and $Y_i \sim Ber(p_i)$, for the $p_i$s being the probabilities of acceptance by genuine nodes in WSN for the best choice of the faking position. Clearly, $E(Y) \leq \theta^{*}_{n_0}$, $\sigma_{n_0}^2=Var(Y)= \sum_{i=1}^{n_0}p_i(1-p_i)$
and $\rho_{n_0}^3=\sum_{i=1}^{n_0} E|Y_i - E(Y_i)|^3<\sigma_{n_0}^2$. Now, under the condition $n_0\ge \lfloor n/2 \rfloor + \theta^{*}_{n_0}$, it is easy to check that $E(A_0)< (n+\theta^{*}_{\lceil n/2 \rceil})/2$, and $(n+\theta^{*}_{\lceil n/2 \rceil})/2 - E(A_0)$ increases with $n$ linearly. So, if $\sigma_{n_0}^2=Var(Y)=Var(A_0)$ remains bounded as a function of $n$, using Chebychev's inequality or otherwise, one can show that $\lim_{n \rightarrow \infty} P(A_0 \ge (n+{\theta^{*}_{\lceil n/2 \rceil}})/2) =0$. But the most likely case is $\sigma_{n_0}^2 \rightarrow
\infty$ as $n \rightarrow \infty$. In this case, one can verify that $\rho_{n_0}/\sigma_{n_0} \rightarrow 0$ as $n \rightarrow \infty$ (or equivalently $n_0 \rightarrow \infty$). Therefore, from Liapunov's Central Limit Theorem \cite{Feller}, we have $[A_0-E(A_0)]/\sqrt{Var(A_0)} \sim N(0,1)$ and $P(A_0 \ge (n+{\theta^{*}_{\lceil n/2 \rceil}})/2) \simeq 1- \Phi\left(\frac{(n+\theta^{*}_{\lceil n/2 \rceil})/2-E(A_0)}{\sqrt{Var(A_0)}}\right).$ Now, $(n+\theta^{*}_{\lceil n/2 \rceil})/2 - E(A_0)$ grows with $n$ linearly, but
$\sqrt{Var(A_0)} \le \sqrt{n_0}/2$ grows at a slower rate. So, $P(A_0 \ge (n+{\theta^{*}_{\lceil n/2 \rceil}})/2)
\rightarrow 0$ as $n \rightarrow \infty$, and for large $n$, $P(A_0 \ge (n+\theta^{*}_{\lceil n/2 \rceil})/2) \simeq 0$. Since the result does not depend on the location of the malicious node ${\bf x}_0$, it holds for all malicious nodes present in the WSN. 
\qed
\end{proof}

Theorems \ref{Thm2_accept} and  \ref{Thm3_final} suggest that if $n$ is sufficiently large and $n_0 \ge \lfloor n/2 \rfloor + \theta^{*}_{n_0}$,
all genuine nodes in the WSN have acceptance probabilities close to 1, and all malicious nodes have acceptance
probabilities close to 0. So, it is expected that after the first round of filtering, if not all, a
large number of genuine nodes will be accepted.
On the contrary, if not all, almost all malicious nodes will get filtered out from the network. However, for  proper
functioning of the WSN, one needs to remove all malicious nodes. In order to do that, we repeat the \textsc{Filtering}
procedure again with the remaining nodes. Now, among these remaining nodes, all but a few are expected to be genuine, and because of this higher
proportion of genuine nodes, the acceptance probability of the genuine nodes are expected to increase, and those for the
malicious nodes nodes are expected to decrease further. So, if this procedure is used repeatedly, after some stage, WSN
is expected to contain genuine nodes only, and no nodes will be filtered out after that. When this is the case, our
\textsc{Filtering} algorithm stops. Note that this algorithm does not need the values of $n_0$ and $n_1$ to be specified. We need to know
$n$ only for computation of $\theta^{*}_{\lceil n/2 \rceil}$. This is the only major computation involved in our method,
but one can understand
that this is an off-line calculation. If we know a priori the values of $\theta^{*}_{\lceil n/2 \rceil}$ for different
$n$, one can use those tabulated values to avoid this computation. Note that the condition $n_0 \ge \lfloor n/2 \rfloor + \theta^{*}_{n_0}$ is only a sufficient condition under which the proposed protocol functions properly. Later, we will see
that in the presence of negligible noise (or in the absence of noise) in the WSN, this condition matches with that of \cite{DMRT08}, and in that
case, it turns out to be a necessary and sufficient condition. However, in other cases, it remains a sufficient condition only, and our protocol may work properly even when it is not satisfied. Our simulation studies in the next section will make this more clear.

\section{Simulation results}
\label{sec:simulation}
We carried out simulation studies to evaluate the performance of our proposed algorithm.
In the first part of the simulation, we calculated the value of  $\theta^{*}_{\lceil n/2 \rceil}$ using the statistical simulation
technique \cite{Sim}, and using that $\theta^{*}_{\lceil n/2 \rceil}$, in the second part, we filtered out all suspected malicious nodes from
the WSN. While maximizing $\theta_{0,\lceil n/2 \rceil}^{f,{\cal X}}$ w.r.t. ${\bf x}_f$, in order to ensure that ${\bf x}_f$ and ${\bf x}_0$ are not close, an open ball
around ${\bf x}_0$ is kept outside the search region ${\cal F}_0$.
Unless mentioned otherwise, we carried out our experiments with 100 sensors nodes, but for varying choice of $n_0$
and $n_1$ and also for different levels of noise (i.e., different values of $\sigma^2$). For choosing the value
of $\sigma^2$, first we considered two imaginary nodes (the sender and the receiver nodes) located at two
extreme corners of ${\cal D}$ and calculated the received signal strength $S^{r}_{extreme}$ for that set up
under ideal condition (see  Friis equation 1). The error standard deviation $\sigma$ was taken as smaller
than or equal to $SS=S^{r}_{extreme}/3$ to ensure that all received signal strengths remain positive (after error contamination)
with probability almost equal to $1$.


\subsection{WSN with insignificant noise ($\sigma=10^{-6}SS$)}

In this case, we observe that the value of $\theta_{0,\lceil n/2 \rceil}^{{\cal X}}$ remains almost constant and equal to $2p=1.9946\simeq2$ for varying choices of ${\bf x}_0$ and ${\cal X}$. So, we have $\theta_{\lceil n/2\rceil}^{*}=2$. In fact, in this case, $\theta^{*}_k$ turns out to be $2$ for all $k \ge 2$. So, if we choose $n_0=52$ and $n_1=48$, the condition $n_0 \ge \lfloor n/2 \rfloor + \theta_{n_0}$ gets satisfied, and one should expect the protocol to work well.
When we carried out experiment, each of the 48 malicious nodes could deceive exactly two genuine nodes, and as a result, the
number of approvals turned out to 50. So, all of them failed to reach the threshold $(n+\theta^{*}_{\lceil n/2\rceil})/2=51$,
and they were filtered out from the WSN at the very first round. On the contrary, all 52 genuine nodes had number of approvals
bigger than (47 out 52 nodes) or equal to (5 out of 52 nodes) 51, and none of them were filtered out. So, at the beginning
of the second round of filtering, we had 52 nodes in the WSN, and all of them were genuine. Since the number of
approvals for each genuine node remained the same as it was in the first round, it was well above the updated
threshold (52+2)/2=27. So, no other nodes were filtered out, and our algorithm stopped with all genuine nodes and no malicious nodes in the network. Needless
to mention that the proposed protocol led to the same result for all higher values of $n_0$. But it did not work properly
when we took $n_0=51$ and $n_1=49$. In that case, all malicious nodes had 51 approvals, and those for the genuine nodes
were smaller than or equal to 51. So, no malicious nodes but some genuine nodes were deleted at the first round of filtering. As a result, the number of approvals for the genuine nodes became smaller at the second round, and that led to the removal of those nodes from the WSN. Note that in this case, the condition $n_0 \ge \lfloor n/2 \rfloor + \theta_{n_0}$ does not get satisfied. So, here the condition is not only sufficient, but it turns out to be necessary as well.

We carried out our experiment also with 101 nodes. When there were 51 genuine and 50 malicious nodes in the WSN,
the protocol did not work properly. But in the case of $n_0=52$ and $n_1=49$, it could filter out all malicious nodes. In that case, each malicious node had 51 approvals, smaller than the threshold $(n+\theta^{*}_{\lceil n/2\rceil})/2=51.5$. But,
48 out of 52 genuine nodes were accepted by all 52 genuine nodes. So, at the end of first round of filtering, in the WSN, we had 48 genuine nodes only. Naturally, no other nodes were removed at the second round. Again this shows that
$n_0 \ge \lfloor n/2 \rfloor + \theta_{n_0}$ is a necessary and sufficient condition for the protocol to work when the noise
is negligible. This is consistent with the findings of \cite{DMRT08}, where the authors allowed no noise in the
network.

\subsection{WSN with significant  noise ($\sigma=SS$)}

Unlike the previous case, here $\theta_{0, \lceil n/2 \rceil}^{{\cal X}}$ did not remain constant for different choices of ${\bf x}_0$ and ${\cal X}$. Considering $n=100$, we computed $\theta_{0, \lceil n/2 \rceil}^{{\cal X}}$ over 500 simulations, and they ranged
between 5.9831 and 23.6964 leading to $\theta_{\lceil n/2 \rceil}^{*}=24$ and $(n+\theta_{\lceil n/2 \rceil}^{*})/2=62$. Clearly, if we start with less than 62 genuine nodes, the protocol fails as all genuine nodes
get deleted at the first round of filtering. So, we started with 62 genuine and 38 malicious nodes. One can notice that here $n_0< \lceil n/2 \rceil + \theta_{\lceil n/2 \rceil}^{*}$, and the condition $n_0 \ge \lceil n/2 \rceil + \theta_{n_0}^{*}$ does not get satisfied. But our protocol worked nicely and filtered out all malicious nodes from the WSN. This shows that the above condition is only sufficient in this case. At the first round of filtering, 54 out of the 62 genuine nodes, and 5 out
of 38 malicious nodes could reach the threshold. So, at the beginning of the second round, we had only 59 nodes in the
network leading to a threshold of (59+24)/2=41.5. Naturally, none of the malicious nodes and all the genuine nodes could
cross this threshold, and at the end of the second round of filtering, we had only 54 nodes in the WSN, all of which were genuine. As expected, no nodes were filtered out at the third round, and our algorithm terminated with 54 genuine nodes.



\subsection{A modified filtering algorithm based on quantiles}

Note that in the previous problem, if we start with 60 genuine nodes and 40 malicious nodes, the
protocol fails as all genuine nodes get deleted at the first round of filtering. Here we propose a slightly
modified version of our protocol that works even when $n_0$ is smaller than $(n+\theta_{\lceil n/2 \rceil}^{*})/2$. Instead of using $(n+\theta_{\lceil n/2 \rceil}^{*})/2$,
we use a sequence of thresholds based on different quantiles of $\theta^{\cal X}_{0,\lceil n/2 \rceil}$. At first, we begin
with the threshold $n/2$ (i.e. replace $\theta_{\lceil n/2 \rceil}^{*}$ by 0) and follow the protocol
described in Section \ref{sec:protocol}. In the process, some nodes may get
filtered out. If there are $n^{(1)}$ nodes remaining in the WSN, we use the threshold
$(n^{(1)}+\theta_{\lceil n/2 \rceil}^{0.1})/2$ (i.e. replace $\theta_{\lceil n/2 \rceil}^{*}$ by $\theta_{\lceil n/2 \rceil}^{0.1}$) and apply the filtering phase of the protocol \textsc{Filtering}
on the remaining nodes. Here $\theta_{\lceil n/2 \rceil}^{q}$
denotes the $q$-th ($0<q<1$) quantile of $\theta^{\cal X}_{0,\lceil n/2 \rceil}$, and this can be estimated from the 500
values of $\theta^{\cal X}_{0,n/2}$ observed during simulation. This procedure is repeated
with thresholds $(n^{(i)}+\theta_{\lceil n/2 \rceil}^{i/10})/2$ for $i=2,3,\ldots,9$, and finally we use the threshold $(n^{(10)}+\theta_{\lceil n/2 \rceil}^{*})/2$. The nodes remaining in the WSN after these 11 steps of filtering are considered as genuine nodes.
This algorithm worked well in our case, and it filtered out all malicious nodes from the WSN
without losing a single genuine node. In fact, all malicious nodes were filtered out after the first two steps, and there
were no deletions of nodes after that. The results for the first two steps are shown in Table \ref{table1} (in our case,
$\theta_{\lceil n/2 \rceil}^{0.1}$ was 8.6786). The total number of approvals for the deleted nodes are also reported in the table for better understanding of the algorithm.

This modified version could filter out up to 44 malicious nodes. In the case of $n_0=56$ and $n_1=44$, only one genuine node was deleted from the WSN before all malicious nodes were filtered out. However, in the case of $n_0=55, n_1=45$ our algorithm failed. In that case, all genuine nodes had 54 or 55 approvals, but almost all malicious nodes had more than 55 approvals. So, our protocol could remove only 9 malicious nodes before all genuine were filtered out.

\begin{table*}[h]
\vspace{-0.4cm}
\centering
{\small
  \caption{First two steps of filtering (based on quantiles) with $n_0=60$ and $n_1=40$.}
\begin{tabular}{ccccccc} \hline
Step($i$) & \multicolumn{2}{c}{Total nodes ($n^{(i)}$)} & Threshold& \multicolumn{2}{c}{Nodes deleted} & No. of approvals\\ \cline{2-3}
\cline{5-6}
&  Genuine & Malicious  & &  Genuine & Malicious & for deleted nodes \\ \hline
 0 & 60 & 40 & 50.00 & 0 & 1 & < 50\\
   & 60 & 39 & 49.50 & 0 & 0 &  --- \\ \hline
 1 & 60 & 39 & 53.84 & 0 & 3 &  51-54 \\
   & 60 & 36 & 52.34 & 0 & 5 & 55-56\\
   & 60 & 31 & 49.84 & 0 & 5 &57-58\\
   & 60 & 26 & 47.34 & 0 & 17 &59-61\\
   & 60 & 9 & 38.84 &0 & 9 &62-69\\
   & 60 & 0 & 34.34 & 0 &0 & ---\\
   \hline
\end{tabular}
\label{table1}}
\end{table*}
\vspace{-0.0cm}



\section{Possible improvements}
In this article, we have used the modified version of Friis transmission equation \ref{mFriiEq} for developing our
\textsc{SecureNeighborDiscovery} protocol. However, sometimes one needs empirical adjustments to the basic Friis
equation \ref{FriiEq} using larger exponents. These are used in terrestrial models, where reflected signals can lead to destructive interference, and foliage and atmospheric gases contribute to signal attenuation \cite{Fette}.
There one can consider ${S_{ji}^{r}}/{S_i^s}$ to be proportional to $G_rG_s ({\lambda}/{d_{ij}})^m$, where
$G_r$ and $G_s$ are mean effective gain of the antennas and $m$ is a scaler
typically lies in the range [2,~4]. If $m$ is known, one can develop a verification
scheme following the method described in this article. Even if it is not known,
it can be estimated by sending signals from known distances and measuring the
received signal strengths.

However, our proposed protocol is not above all limitations. In this article, we have assumed that the underlying network topology is a complete graph. But, in practice, this may not always be the case. In multi-hop network topology, our \textsc{SecureNeighborDiscovery}
protocol based on voting can be used in the neighborhood of each node, provided there are sufficiently many genuine
nodes in the neighborhood. However, the performance of this verification protocol in the case of multi-hop
network topology needs to be thoroughly investigated.

\section{Concluding remarks}

In this article, we have proposed a distributed secure position verification protocol for WSNs in noisy channels.
In this approach, without relying on any trusted sensor nodes, all genuine nodes detect the existence
of malicious nodes and filter them out with a very high probability. The proposed method is conceptually quite
 simple, and it is easy to implement if $\theta^{*}_{\lceil n/2 \rceil}$ is known. Calculation of $\theta^{*}_{\lceil n/2 \rceil}$
 is the only major computation involved in our method, but one should note that this is an off-line calculation.

In the case of negligible noise in the WSN, we have seen that the performance of our protocol matches with that of the deterministic methods of \cite{DMRT08}. However, when the noise is not negligible, each of the sensor nodes can only have a limited precision for distance estimation. In such cases, it is not possible to develop a deterministic algorithm \cite{DMRT08}. Our protocol
based on probabilistic algorithm takes care of this problem, and it filters out all malicious nodes with
very high probability. When the number of nodes in the WSN is reasonable large, this probability
turns out to be very close to 1. So, for all practical purposes, our proposed method
behaves almost like a deterministic algorithm as we have seen in Section \ref{sec:simulation}. Since the influence of noise on signal
propagation is very common in WSNs, this probabilistic approach is very practical for the
implementation perspective in the real world.

One should also notice that compared to the randomized protocol of Hwang \emph{et al.} \cite{HHK08}, our protocol
leads to substantial savings on the time and the power used for transmissions. In \cite{HHK08}, the message complexity is $O(n^2)$, since each sensor announces one distance at a time in a round robin fashion. But, in the case of our proposed protocol, $O(n)$ messages are transmitted in the first phase, and each sensor announces all distances through a single message.



\bibliographystyle{abbrv}
\small
\bibliography{mybib}
\end{document}